\documentclass[12pt]{article}
\usepackage[intlimits]{amsmath}
\usepackage{amssymb,amsthm,slashed,mathabx}

\usepackage[T1]{fontenc}
\usepackage[cp1250]{inputenc}

\usepackage[showonlyrefs]{mathtools}
\mathtoolsset{showonlyrefs=true}
\usepackage{fixltx2e}
\MakeRobust{\eqref}

\newcommand{\<}{\langle}
\renewcommand{\>}{\rangle}
\newcommand{\D}{\mathcal{D}}
\newcommand{\Hc}{\mathcal{H}}
\newcommand{\mR}{\mathbb{R}}
\newcommand{\mC}{\mathbb{C}}
\newcommand{\mZ}{\mathbb{Z}}
\newcommand{\1}{{\bf 1}}
\newcommand{\al}{\alpha}
\newcommand{\ga}{\gamma}
\newcommand{\la}{\lambda}
\newcommand{\dsp}{\displaystyle}
\newcommand{\vep}{\varepsilon}
\newcommand{\vph}{\varphi}
\newcommand{\w}{\omega}
\newcommand{\ba}{\bar{a}}
\newcommand{\bb}{\bar{b}}
\newcommand{\W}{\Omega}
\newcommand{\ov}{\overline}
\newcommand{\p}{\partial}
\newcommand{\con}{\mathrm{const}}

\newtheorem{pr}{Proposition}

\DeclareMathOperator{\Ker}{Ker}
\DeclareMathOperator{\modulo}{mod}

\title{Generalized uncertainty relations}
\author{Andrzej Herdegen\thanks{e-mail: andrzej.herdegen@uj.edu.pl} {} and Piotr Ziobro\thanks{e-mail: piotr.ziobro@uj.edu.pl}\\
{\it Institute of Physics, Jagiellonian University,}\\
{\it ul.\,S.\,{\L}ojasiewicza 11, 30-348  Krak\'{o}w, Poland}}
\date{}

\begin{document}

\maketitle

\begin{abstract}
 The standard uncertainty relations (UR) in quantum mechanics are typically used for unbounded operators (like the canonical pair). This implies the need for the control of the domain problems. On the other hand, the use of (possibly bounded) functions of basic observables usually leads to more complex and less readily interpretable relations. Also, UR may turn trivial for certain states if the commutator of observables is not proportional to a positive operator. In this letter we consider a generalization of standard UR resulting from the use of two, instead of one, vector states. The possibility to link these states to each other in various ways adds additional flexibility to UR, which may compensate some of the above mentioned drawbacks. We discuss applications of the general scheme, leading not only to technical improvements, but also to interesting new insight.

\vspace{1ex}
\noindent
MSC2010: 81Q10, 47N50, 47B15

\end{abstract}

\section{Introduction}\label{stand}

In popular textbook terms, the quantum-mechanical UR states that if any three observables $A,B,C$ satisfy the commutation relation
\begin{equation}\label{commut}
 [A,B]=iC\,,
\end{equation}
then in any normalized quantum state $\psi$ there is
\begin{equation}\label{uncert}
 \Delta_\psi(A)\Delta_\psi(B)\geq \tfrac{1}{2}|\<C\>_\psi|\,,
\end{equation}
where $\<A\>_\psi=(\psi,A\psi)$ is the mean value of the probability distribution of $A$ in the state $\psi$ and $\Delta_\psi(A)=[(\psi,(A-\<A\>_\psi)^2\psi)]^{1/2}$ is the standard deviation of this distribution (see, e.g., \cite{gr95}). This formulation follows the extension of the original Heisenberg relation \cite{he27} given by Robertson \cite{ro29}.

Quantum-mechanical observables are self-adjoint operators, which usually are not bounded (as is the case with the most prominent example of the~canonical pair). Therefore, a more mathematically conscious formulation of UR has to take into account domain restrictions (see, e.g., \cite{ha13}). By $\D(A)\subseteq\Hc$ we denote the domain of an operator $A$ acting in a Hilbert space~$\Hc$. For any self-adjoint operators $A,B$ the relation \eqref{commut} defines a symmetric operator $C$ on the domain
\begin{equation}\label{domC}
 \D(C)=\D(AB)\cap\D(BA)\,.
\end{equation}
It is now a simple mathematical theorem, that the relation \eqref{uncert} is satisfied for any normalized vector $\psi\in\D(C)$.

The question that now arises is this: is $\D(C)$ ``sufficiently large'' for the relation to be of use? In the worst possible case it could happen that $\D(C)$ would not be dense in $\Hc$, which would leave outside the range of the relation the whole closed subspace $\D(C)^\bot\subseteq\Hc$. This case is of little use, so we assume from now on that $\ov{\D(C)}=\Hc$ (bar denoting the closure). Even with this restriction, we are still left with a few open problems:
\begin{itemize}
\item[(i)] If $\psi$ is not in $\D(A)$, then $\Delta_\psi(A)$ may be regarded as infinite; the relation tells us then nothing on the spread of distribution of $B$ in the state $\psi$.
\item[(ii)] If $\psi\in\D(A)\cap\D(B)$, then the product of uncertainties is finite. It may happen that also $C$ extends to this larger domain, but UR need not extend to this case.
\item[(iii)] The restrictions of $A$ and $B$ to $\D(C)$ need not determine self-adjoint operators $A$ and $B$ uniquely (i.e., in technical terms, $\D(C)$ need not be a core for $A$ and $B$), so that the UR does not admit all states crucial for the determination of the observables $A$ and $B$ themselves.
\item[(iv)] In general, if $C$ is not strictly positive, it may have vanishing expectation value $\<C\>$ in the state $\psi$ under consideration. The relation has no nontrivial content in this case.
\end{itemize}

Standard examples of the difficulties (ii) and (iii) occur for the `angle -- angular momentum' pair. Let $\Phi,L$ be the operators in $\Hc=L^2(\<0,2\pi\>)$ defined by:
\begin{gather*}
 \D(\Phi)=\Hc\,,\ \D(L)=\{\psi\in \Hc\mid \psi'\in\Hc,\ \psi(0)=\psi(2\pi)\}\label{L}\\
 (\Phi\psi)(\vph)=\vph\psi(\vph)\,,\quad (L\psi)(\vph)=-i\psi'(\vph)\,,
\end{gather*}
(where $\psi'$ is a measurable derivative function of $\psi$). With these domains both operators are self-adjoint. The UR \eqref{uncert} on the domain \eqref{domC} holds then with $C\psi=\psi$ on
\begin{equation}\label{C}
 \D(C)=\{\psi\in \Hc\mid \psi'\in\Hc,\ \psi(0)=\psi(2\pi)=0\}\,,
\end{equation}
so that on this domain, which may be shown to be dense: $\ov{\D(C)}=\Hc$, one has for normalized $\psi$:
\begin{equation}\label{PhiL}
 \Delta_\psi(\Phi)\Delta_\psi(L)\geq\tfrac{1}{2}\,.
\end{equation}
Both sides of this inequality are meaningful and finite for $\psi\in\D(\Phi)\cap\D(L)=\D(L)$, but the inequality does not extend to this larger domain: take any eigenstate of $L$ to find $0\geq 1$, which illustrates difficulty (ii). The explanation of this seeming paradox is that for \mbox{$L\psi=m\psi$} there is no sequence of vectors $\psi_n\in\D(C)$ which converges to $\psi$, and at the same time satisfies $\Delta_{\psi_n}(L)\rightarrow \Delta_\psi(L)$.

Even more disturbing is the fact that the domain $\D(C)$ is not sufficient to uniquely identify the self-adjoint operator $L$ taking part in the above relation (i.e., $\D(C)$ is not a core for $L$). To see this it is sufficient to note that for each complex $\w$ with $|\w|=1$ one has a self-adjoint operator $L_\w$ defined as $L$, but on a different domain $\D(L_\w)=\{\psi\in \Hc\mid \psi'\in\Hc,\ \psi(2\pi)=\w\,\psi(0)\}$ (see, e.g., \cite{rs72}).
Each of these operators may replace $L$ in the above UR, with no change of $C$ or $\D(C)$. This illustrates difficulty (iii).

Difficulty (iv) occurs in the well-known case of $3$-dimensional angular momentum operators; see below.

In the rest of this article we propose a simple extension of the minimization argument leading to UR. Our generalized UR are given in Section 2, Proposition 1. Applications of the general scheme to a few cases of particular physical interest are discussed in Section 3. Our use of the result of Proposition 1 is closely related to the mathematical and physical problem of the original formulation of UR: find optimal bounds on spreads of probability distributions defined by a given state for two incompatible observables.\footnote{A different generalization of UR has been recently proposed in \cite{ma14}.} We do not touch upon various other problems broadly related to `uncertainty' in quantum mechanics, which recently draw considerable attention in physical literature (entropic uncertainty, error-disturbance problem, parameter estimation etc.).

\section{Generalized uncertainty relation}

We consider general normal operators. We recall that (see, e.g., \cite{ru73}):
\begin{itemize}
\item[(i)] $A$ is called normal if it is densely defined, closed and satisfies $A^*A=AA^*$.
\item[(ii)] For normal $A$ there is $\D(A^*)=\D(A)$ and $\|A^*\chi\|=\|A\chi\|$ for all vectors \mbox{$\chi\in\D(A)$}.
\item[(iii)] All functions of self-adjoint operators are normal operators.
\item[(iv)] All normal operators satisfy the spectral theorem, $A=\int_{\sigma(A)}z\,dE^A_z$, where $\sigma(A)$ is the spectrum of $A$ -- a~closed subset of $\mC$, and $E^A_\W$ is its spectral family.  For each normalized vector $\psi$ the mapping $\W\mapsto (\psi,E^A_\W\psi)$ is a probability measure on Borel sets $\W\subseteq\sigma(A)$, with the mean value $\<A\>_\psi=(\psi,A\psi)\in\mC$ and the standard deviation $\Delta_\psi(A)=\|(A-\<A\>_\psi)\psi\|$.
\end{itemize}
Let $A$ and $B$ be normal operators with the domains $\D(A)$ and $\D(B)$, respectively.  Then we define a sesqui-linear form
\begin{equation*}\label{form}
 q_{A,B}(\vph,\chi)=(A^*\vph,B\chi)-(B^*\vph,A\chi)
\end{equation*}
with the domain $\vph,\chi \in\D(q_{A,B})\equiv \D(A)\cap\D(B)$ (an example of a recent use of this `weak commutator' may be found in \cite{ta11}). In special case when $\chi\in\D(AB)\cap\D(BA)$ this weak commutator becomes the ordinary one, $q_{A,B}(\vph,\chi)=(\vph,[A,B]\chi)$. Moreover, for any complex numbers $a$, $b$, we denote $A_a=A-a\1$, $B_b=B-b\1$, which are also normal operators.
\begin{pr}[Generalized Uncertainty Relation]\label{gur}\ \\
For any normal operators $A$, $B$ and unit vectors $\vph,\chi\in\D(q_{A,B})$ the following inequality holds
\begin{multline}\label{guncert}
 |q_{A,B}(\vph,\chi)|\leq \inf_{a,b\in\mC}\Big(\|(A_a\vph\|\|B_b\chi\|+\|B_b\vph\|\|A_a\chi\|\Big)\\
 =\inf_{\begin{smallmatrix}\la_1,\la_2\in\<0,1\>\\ \la_1+\la_2=1\end{smallmatrix}}
 \bigg\{\sqrt{\Delta_\vph^2(A) + |\delta\<A\>|^2\la_1^2}\,\sqrt{\Delta_\chi^2(B) + |\delta\<B\>|^2\la_2^2}\\
 +\sqrt{\Delta_\vph^2(B) + |\delta\<B\>|^2\la_1^2}\,\sqrt{\Delta_\chi^2(A) + |\delta\<A\>|^2\la_2^2}\bigg\}\,,
\end{multline}
where $\delta\<A\>=\<A\>_\vph-\<A\>_\chi$, $\delta\<B\>=\<B\>_\vph-\<B\>_\chi$.
\end{pr}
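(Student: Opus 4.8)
The plan is to carry through, for the two-vector weak commutator, the Cauchy--Schwarz/triangle-inequality argument underlying the Robertson relation, and then to reduce the resulting minimisation to the stated one-parameter form. The first thing I would record is that $q_{A,B}$ is insensitive to shifts of the observables: $q_{A_a,B_b}(\vph,\chi)=q_{A,B}(\vph,\chi)$ for all $a,b\in\mC$. Indeed $A_a,B_b$ are normal with $\D(A_a)=\D(A_a^*)=\D(A)$ (and similarly for $B$), so both sides are defined on $\D(A)\cap\D(B)$, and on expanding $q_{A_a,B_b}$ and using $(A^*\vph,\chi)=(\vph,A\chi)$, $(B^*\vph,\chi)=(\vph,B\chi)$ (valid on this domain) the cross terms cancel in pairs. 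Given this, the first inequality of \eqref{guncert} is immediate: for any $a,b$,
\[
|q_{A,B}(\vph,\chi)|=|q_{A_a,B_b}(\vph,\chi)|\le\|A_a^*\vph\|\,\|B_b\chi\|+\|B_b^*\vph\|\,\|A_a\chi\|
\]
by the triangle inequality and Cauchy--Schwarz, while $\|A_a^*\vph\|=\|A_a\vph\|$ and $\|B_b^*\vph\|=\|B_b\vph\|$ by normality (property (ii)); one then takes the infimum over $a,b\in\mC$.

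To evaluate that infimum I would use the elementary identity $\|A_a\vph\|^2=\Delta_\vph^2(A)+|a-\<A\>_\vph|^2$ and its three analogues, so that $f(a,b):=\|A_a\vph\|\|B_b\chi\|+\|B_b\vph\|\|A_a\chi\|$ depends on $a$, resp.\ on $b$, only through the distances of $a$ to $\<A\>_\vph,\<A\>_\chi$, resp.\ of $b$ to $\<B\>_\vph,\<B\>_\chi$. Unless one of $A,B$ acts as one and the same scalar on both $\vph$ and $\chi$ (in which case $q_{A,B}(\vph,\chi)=0$ and \eqref{guncert} is trivial), $f(a,b)\to\infty$ as $|a|+|b|\to\infty$, so the infimum is attained at some $(a_0,b_0)$. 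Identifying $\mC$ with $\mR^2$, at such a minimiser (at which $f$ is differentiable) the condition $\nabla_a f=0$ reads
\[
\frac{\|B_{b_0}\chi\|}{\|A_{a_0}\vph\|}\,(a_0-\<A\>_\vph)+\frac{\|B_{b_0}\vph\|}{\|A_{a_0}\chi\|}\,(a_0-\<A\>_\chi)=0,
\]
which forces $a_0-\<A\>_\vph$ and $a_0-\<A\>_\chi$ to be antiparallel, i.e.\ $a_0=\<A\>_\vph-\la_1\,\delta\<A\>$ for some $\la_1\in\<0,1\>$ (put $\la_2:=1-\la_1$), and, comparing lengths, $\la_1/\la_2=\|A_{a_0}\vph\|\|B_{b_0}\vph\|\big/\|A_{a_0}\chi\|\|B_{b_0}\chi\|$. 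Likewise $\nabla_b f=0$ yields $b_0=\<B\>_\vph-\mu_1\,\delta\<B\>$ with $\mu_1/(1-\mu_1)$ equal to the \emph{same} ratio, so $\mu_1=\la_1$.

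Substituting $|a_0-\<A\>_\vph|=\la_1|\delta\<A\>|$, $|a_0-\<A\>_\chi|=\la_2|\delta\<A\>|$, $|b_0-\<B\>_\vph|=\la_1|\delta\<B\>|$, $|b_0-\<B\>_\chi|=\la_2|\delta\<B\>|$ into the norm identities turns $f(a_0,b_0)$ into exactly the expression in braces in the last two lines of \eqref{guncert} at this value of $\la_1$; since that one-parameter family sits inside $\mC^2$ and contains the minimiser, its infimum equals the infimum over all $a,b\in\mC$, which is the asserted equality. The step I expect to be delicate is precisely this last reduction: the observation that the stationarity conditions in $a$ and in $b$ return the \emph{identical} ratio $\|A_{a_0}\vph\|\|B_{b_0}\vph\|\big/\|A_{a_0}\chi\|\|B_{b_0}\chi\|$, which collapses the a priori two-parameter optimum (one parameter for $a_0$ on its segment, one for $b_0$ on its segment) onto the one-parameter diagonal $\mu_1=\la_1$; together with the bookkeeping for the degenerate configurations ($\<A\>_\vph=\<A\>_\chi$, where $\la_1$ is simply unconstrained by $A$; and points where some norm $\|A_a\vph\|,\dots$ vanishes, where $f$ is not smooth and the relevant segment endpoint must be checked by hand). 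The remaining pieces are the routine estimates already familiar from the one-vector Robertson case.
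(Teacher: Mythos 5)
Your proposal is correct and follows essentially the same route as the paper: shift-invariance $q_{A_a,B_b}=q_{A,B}$, then triangle plus Cauchy--Schwarz plus $\|A_a^*\vph\|=\|A_a\vph\|$ for the inequality, then the stationarity conditions in $a$ and $b$, whose key feature is that both return the same ratio $\la_1/\la_2=\|A_a\vph\|\|B_b\vph\|/(\|A_a\chi\|\|B_b\chi\|)$, collapsing the minimiser onto the one-parameter segment. Your extra bookkeeping for the degenerate and non-smooth configurations is in fact more careful than the paper, which simply asserts that the infimum is attained at a stationary point.
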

\begin{proof}
We first note that $q_{A_a,B_b}=q_{A,B}$. Therefore, the successive use of the triangle and the Schwarz inequalities (and property~(ii) above) gives
\begin{equation}\label{qAB}
 |q_{A,B}(\vph,\chi)|\leq\|A_a\vph\|\|B_b\chi\|+\|B_b\vph\|\|A_a\chi\|\,.
\end{equation}
Thus, using the arbitrariness of $a$ and $b$, we arrive at the first relation (inequality) in \eqref{guncert}. The rhs of \eqref{qAB} is a~real function $F(a,\ba, b, \bb)$, nondecreasing for $|a|$ or $|b|$ sufficiently large and tending to $+\infty$. Therefore, it reaches its infimum at one of its stationary points, which are the solutions of the set of equations $\p F/\p\ba=0$, $\p F/\p\bb=0$, i.e.
\begin{gather}
\ga_2\big(a-\<A\>_\vph\big)+\ga_1\big(a-\<A\>_\chi\big)=0\,,\\
\ga_2\big(b-\<B\>_\vph\big)+\ga_1\big(b-\<B\>_\chi\big)=0\,,\\
\ga_1=\|A_a\vph\|\,\|B_b\vph\|\,,\ \ga_2=\|A_a\chi\|\,\|B_b\chi\|\,.
\end{gather}
Solving the first two of these equations for $a$ and $b$ in terms of $\la_i\equiv \ga_i/(\ga_1+\ga_2)$ ($i=1,2$), one obtains $a=\la_2\<A\>_\vph+\la_1\<A\>_\chi$, $b=\la_2\<B\>_\vph+\la_1\<B\>_\chi$. Setting these values into \eqref{qAB} one obtains the second relation (equality) in \eqref{guncert}. The condition for stationary points is now reduced to the condition on $\la_i$: $\ga_1\la_2=\ga_2\la_1$, i.e.
\begin{multline}\label{stat}
\Big(\Delta_\vph^2(A) + |\delta\<A\>|^2\la_1^2\Big)\Big(\Delta_\vph^2(B) + |\delta\<B\>|^2\la_1^2\Big)\la_2^2\\
=\Big(\Delta_\chi^2(A) + |\delta\<A\>|^2\la_2^2\Big)\Big(\Delta_\chi^2(B) + |\delta\<B\>|^2\la_2^2\Big)\la_1^2\,,
\end{multline}
where $\la_1,\la_2\in\<0,1\>$, $\la_1+\la_2=1$. This condition, when expressed in terms of one unknown, is an algebraic equation of fifth order, which is not algebraically solvable in general. Thus we postpone its solution to more special cases.
\end{proof}

The extended flexibility of the relation \eqref{guncert} relies on the possibility to assume more general relations between $\vph$ and $\chi$, than equality. We shall discuss examples in the next section. But we end this section with this simple special case.

For $\vph=\chi$, $\|\chi\|=1$,  one finds easily that the infimum is reached for $a=\<A\>_\chi$ and $b=\<B\>_\chi$, so
\begin{equation}
  \tfrac{1}{2}|q_{A,B}(\chi,\chi)|\leq \Delta_\chi(A)\Delta_\chi(B)\,,\quad \chi\in\D(A)\cap\D(B)\,.
\end{equation}
The UR in this form was applied by Kraus \cite{kr65} to the above-mentioned case of angle -- angular momentum pair. Integrating by parts one finds that  $q_{L,\Phi}(\chi,\chi)=i(2\pi|\chi(2\pi)|^2-1)$, hence
\[
 \tfrac{1}{2}\big|1-2\pi |\chi(2\pi)|^2\big|\leq\Delta_\chi(L)\Delta_\chi(\Phi)\,,\quad \chi\in\D(L)\,.
 \]
This yields correct $0\leq0$ for $L\chi=m\chi$, but a drawback of this relation is that beside uncertainties it needs the value of $\chi$ at a particular point.

\section{Applications of the generalized relation}

We start with a few remarks on unitary operators and one-parameter groups. If $V$ is a unitary operator on $\Hc$, then it is normal, with the spectrum on the unit circle, and with $\Delta^2_\psi(V)=\|(V-\<V\>_\psi)\psi\|^2=1-|\<V\>_\psi|^2\leq 1$. In this case we shall denote
\begin{equation}
 \delta_\psi(V)\equiv\frac{\Delta_\psi(V)}{\big[1-\Delta^2_\psi(V)\big]^{1/2}}=\frac{\big[1-|\<V\>_\psi|^2\big]^{1/2}}{|\<V\>_\psi|}\,.
\end{equation}
This parameter is an increasing function of the deviation, for small spread $\delta_\psi(V)\approx\Delta_\psi(V)$, while for $\Delta_\psi(V)\to1$ (maximal spread) it tends to infinity.

If $V(s)=\exp[-isX]$ is a one-parameter unitary group with the self-adjoint generator $X$, then $\<V(s)\>_\psi=\int_{\sigma(X)} e^{-isx} d\mu_\psi(x)$, where $d\mu_\psi(x)$ is the spectral measure of $X$ in the state $\psi$. Using this representation and its conjugate one finds
\begin{equation}
 \Delta^2_\psi(V(s))=2\int\limits_{\sigma(X)\times\sigma(X)}\!\!\sin^2\big[\tfrac{1}{2}s(x-x')\big]\,d\mu_\psi(x)d\mu_\psi(x')\,.
\end{equation}
Therefore, if $\psi\in\D(X)$, the finite limit exists
\begin{equation}
 \lim_{s\to0}\frac{\Delta^2_\psi(V(s))}{s^2}=\tfrac{1}{2}\int (x-x')^2\,d\mu_\psi(x)d\mu_\psi(x')=\Delta^2_\psi(X)\,,
\end{equation}
which is also equal to $\dsp\lim_{s\to0}s^{-2}\delta^2_\psi(V(s))$ (if $\psi\notin\D(X)$ the limit is $+\infty$).

\subsection{A Weyl pair}

\begin{pr}
Let $U$ and $W$ be unitary operators on $\Hc$, such that
\[
WU=\w\, UW\,,
\]
with some complex number $\w$, $|\w|=1$ -- we shall call any such system a Weyl pair.  Then for each normalized vector $\psi\in\Hc$ the following inequality holds
\begin{equation}\label{WUuncert}
 \tfrac{1}{2}|\w-1|\leq \,\delta_\psi(W)\,\delta_\psi(U)\,.
\end{equation}
\end{pr}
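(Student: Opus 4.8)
The plan is to apply the Generalized Uncertainty Relation of Proposition~\ref{gur} to the bounded (hence normal, everywhere-defined) operators $A=W$, $B=U$, with the two states taken to be $\vph=\psi$ and $\chi=W^{-1}\psi=W^*\psi$; both are unit vectors and $\D(q_{W,U})=\Hc$, so there is no restriction. The point of the asymmetric choice $\chi=W^*\psi$ is that $W\chi=WW^*\psi=\psi$, which makes the weak commutator collapse with no residual correlation term:
\[
 q_{W,U}(\psi,W^*\psi)=(W^*\psi,UW^*\psi)-(U^*\psi,\psi)=(\psi,WUW^*\psi)-\<U\>_\psi .
\]
Rewriting the Weyl relation as $WUW^*=\w\,U$ (multiply $WU=\w\,UW$ by $W^*$ on the right) gives $(\psi,WUW^*\psi)=\w\,\<U\>_\psi$, so $q_{W,U}(\psi,W^*\psi)=(\w-1)\,\<U\>_\psi$.

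Next I would feed this data into \eqref{guncert}. The identity $WUW^*=\w\,U$ together with $|\w|=1$ shows that conjugation by $W^*$ leaves the relevant moments undisturbed: $\<W\>_{W^*\psi}=\<W\>_\psi$, $\<U\>_{W^*\psi}=\w\,\<U\>_\psi$, hence $\Delta_{W^*\psi}(W)=\Delta_\psi(W)$, $\Delta_{W^*\psi}(U)=\Delta_\psi(U)$, while the shifts in \eqref{guncert} are $\delta\<W\>=0$ and $\delta\<U\>=(1-\w)\,\<U\>_\psi$. With $\delta\<W\>=0$ the right-hand side of \eqref{guncert} reduces to
\[
 \Delta_\psi(W)\inf_{\la_1+\la_2=1}\Big\{\sqrt{\Delta_\psi^2(U)+|1-\w|^2|\<U\>_\psi|^2\,\la_1^2}+\sqrt{\Delta_\psi^2(U)+|1-\w|^2|\<U\>_\psi|^2\,\la_2^2}\Big\},
\]
which is a symmetric, convex function of $\la_1\in\<0,1\>$; its infimum is attained at $\la_1=\la_2=\tfrac12$ and equals $2\,\Delta_\psi(W)\sqrt{\Delta_\psi^2(U)+\tfrac14|1-\w|^2|\<U\>_\psi|^2}$.

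Proposition~\ref{gur} thus yields $|\w-1|\,|\<U\>_\psi|\le 2\,\Delta_\psi(W)\sqrt{\Delta_\psi^2(U)+\tfrac14|1-\w|^2|\<U\>_\psi|^2}$; squaring, moving the term $\Delta_\psi^2(W)|1-\w|^2|\<U\>_\psi|^2$ to the left, and using the unitarity identity $1-\Delta_\psi^2(W)=|\<W\>_\psi|^2$ gives the unconditional bound
\[
 |\w-1|^2\,|\<U\>_\psi|^2\,|\<W\>_\psi|^2\le 4\,\Delta_\psi^2(W)\,\Delta_\psi^2(U).
\]
If $\<U\>_\psi=0$ or $\<W\>_\psi=0$ this is trivial and \eqref{WUuncert} holds with its right-hand side understood as $+\infty$; otherwise divide by $4|\<U\>_\psi|^2|\<W\>_\psi|^2$, take square roots, and use $\Delta_\psi(W)/|\<W\>_\psi|=\delta_\psi(W)$ and the analogue for $U$ to obtain exactly \eqref{WUuncert}. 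I expect the one genuinely delicate step to be the very first: recognizing that $\chi=W^{-1}\psi$ is the partner that simultaneously kills the correlation in $q_{W,U}$ and preserves the deviations, whereas more symmetric-looking choices such as $\chi=U\psi$ leave an uncontrolled term $(\psi,U^2W\psi)$; the rest is the elementary optimization and algebra sketched above.
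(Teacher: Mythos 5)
Your proof is correct and follows essentially the same strategy as the paper's: apply Proposition~\ref{gur} with $A=W$, $B=U$ and an asymmetric pair of states involving $W^*\psi$, use the Weyl relation to evaluate the weak commutator and the shifted means, take $\la_1=\la_2=\tfrac12$, and finish with the same algebra exploiting $1-\Delta^2_\psi(V)=|\<V\>_\psi|^2$. The only difference is your choice $\vph=\psi$ instead of the paper's $\vph=U\psi$, which puts an extra factor $|\<U\>_\psi|$ on the left-hand side (cancelled at the end) but makes $\delta\<W\>=0$, so the $\la$-optimization reduces to minimizing a manifestly symmetric convex function rather than analyzing the stationary-point condition \eqref{stat} --- a harmless and slightly cleaner variant.
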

\begin{proof}
We consider relation \eqref{guncert} with the substitutions $A=W$, $B=U$, $\vph=U\psi$, $\chi=W^*\psi$. Then using the relation $UW^*=\w W^*U$ one finds $q_{W,U}(\vph,\chi)=\w-1$, so the lhs of \eqref{guncert} is $|\w-1|$. Also, algebraic relations give $\<W\>_\vph=\w\<W\>_\psi$, $\<W\>_\chi=\<W\>_\psi$, $\<U\>_\vph=\<U\>_\psi$, $\<U\>_\chi=\w\<U\>_\psi$. We set these means into the stationary point condition \eqref{stat}. After some simple algebra this condition is reduced to the form
\begin{equation}
 \Big([\delta_\psi(W)\delta_\psi(U)]^2-[4\la_1\la_2\vep^2]^2\Big)(\la_1-\la_2)=0\,,
\end{equation}
where we have introduced $\vep=\tfrac{1}{2}|\w-1|\in\<0,1\>$.
For the stationary point $\la_1=\la_2=\tfrac{1}{2}$ the relation \eqref{guncert} implies
\begin{equation*}
 \vep\leq\sqrt{\Delta^2_\psi(W)+|\<W\>_\psi|^2\vep^2}\sqrt{\Delta^2_\psi(U)+|\<U\>_\psi|^2\vep^2}\,.
\end{equation*}
Solving this for $\vep$ one obtains relation \eqref{WUuncert}. This stationary point is in fact the unique minimum point of the rhs of our UR \eqref{guncert} in all nontrivial cases, i.e. when $\w\neq1$. Indeed, using inequality \eqref{WUuncert}, and also taking into account that $\la_1\la_2<\tfrac{1}{4}$ for $\la_1\neq\la_2$, we find for $\vep>0$:
\[
 \delta_\psi(W)\delta_\psi(U)-4\la_1\la_2\vep^2>\vep(1-\vep)\geq0\,,
\]
which closes the proof.
\end{proof}
Remark. Relation equivalent to our inequality \eqref{WUuncert} has been given earlier by Massar and Spindel \cite{ma08}, and their proof is to be found in the supplementary material to that reference. Our form of the inequality has more directly visible interpretation. Also, our much simpler proof is an application of our general minimization scheme, showing that the inequality is in a certain sense optimal.

\subsection{The canonical pair}

The standard canonical pair operators $X,P$ are uni\-quely (up to unitary equivalence) defined as being generators of irreducibly represented one-parameter groups $W(\al)=\exp[-i\al X]$, $U(\beta)=\exp[-i\beta P]$ which satisfy relation
\begin{equation}
 W(\al)U(\beta)=\exp[-i\al\beta]U(\beta)W(\al)\,.
\end{equation}
Inequality \eqref{WUuncert} may be written in the form
\begin{equation}
 \frac{\big|\sin\big(\tfrac{1}{2}\al\beta\big)\big|}{|\al\beta|}\leq\frac{\delta_\psi(W(\al))}{|\al|}\frac{\delta_\psi(U(\beta))}{|\beta|}\,.
\end{equation}
For small $\al$, $\beta$ quantities on the rhs approximate the standard deviations of $X$ and $P$, but are finite for all~$\psi$.
If $\psi$ is in the domain of one of the variables, say $\psi\in \D(X)$, then the limit in $\al$ results in
\begin{equation}
 \tfrac{1}{2}\leq\Delta_\psi(X)\frac{\delta_\psi(U(\beta))}{|\beta|}\,,
\end{equation}
and when $\psi\in\D(X)\cap\D(P)$, the usual UR is obtained for $\beta\to0$, with the guarantee of its validity on this domain.

\subsection{Angle -- angular momentum pair}

Similarly, the angle -- angular momentum operators $\Phi$, $L$ are uniquely (up to unitary equivalence) defined by irreducibly represented unitary operators $W(n)=\exp[-in\Phi]$, $n\in\mZ$, and $U(\beta)=\exp[-i\beta L]$, $\beta\in\mR/\modulo 2\pi$, which satisfy relation
\begin{equation}
 W(n)U(\beta)=\exp[-in\beta]U(\beta)W(n)\,.
\end{equation}
 The UR now takes the form
\begin{equation}
 \frac{\big|\sin\big(\tfrac{1}{2}n\beta\big)\big|}{|\beta|}\leq \delta_\psi(W(n))\frac{\delta_\psi(U(\beta))}{|\beta|}\,.
\end{equation}
For $\psi\in\D(L)$ this implies
\begin{equation}
 \tfrac{1}{2}|n|\leq\delta_\psi(W(n))\Delta_\psi(L)\,.
\end{equation}
(The latter relation for $n=1$ was earlier discussed by Hradil \emph{et all} in \cite{hr06}). In particular, if $\psi$ is an eigenvector of $L$, then the spread reaches $\Delta_\psi(W(n))=1$, the maximal value.

\subsection{Unitary transformation}
\begin{pr}
Let $U$ be a unitary transformation, $A$ a normal operator, and denote $A_U=U^*\!A\,U$. Then for each normalized $\chi\in\D(A)\cap\D(A_U)$ $=\D(A)\cap U^*\D(A)$ there is
\begin{equation}\label{utr}
 |\<A_U\>_\chi-\<A\>_\chi|\leq\delta_\chi(U)\big[\Delta_\chi(A_U)+\Delta_\chi(A)\big]\,.
\end{equation}
\end{pr}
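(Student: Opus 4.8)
The plan is to obtain \eqref{utr} as a particular instance of the Generalized Uncertainty Relation of Proposition~\ref{gur}. I would apply \eqref{guncert} with the substitutions $B=U$ (a normal, indeed unitary, operator), $A$ kept as it is, and the unit vectors $\vph=U\chi$, $\chi=\chi$. The domain hypothesis $\chi\in\D(A)\cap\D(A_U)=\D(A)\cap U^*\D(A)$ guarantees that $\chi$ and $U\chi$ both lie in $\D(A)=\D(A)\cap\D(U)=\D(q_{A,U})$, so the proposition is applicable. The first task is to identify the weak commutator: since $A$ is normal, $U\chi\in\D(A)=\D(A^*)$, and the defining relation of the adjoint together with unitarity gives $(A^*U\chi,U\chi)=(U\chi,AU\chi)=(\chi,U^*AU\chi)=(\chi,A_U\chi)$, while $U^*U\chi=\chi$; hence
\[
 q_{A,U}(U\chi,\chi)=(\chi,A_U\chi)-(\chi,A\chi)=\<A_U\>_\chi-\<A\>_\chi\,,
\]
so the left-hand side of \eqref{guncert} is already the left-hand side of \eqref{utr}.

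The next step is to evaluate the right-hand side of \eqref{guncert} under these substitutions. Because $U$ is unitary and commutes with scalars, $\<A\>_{U\chi}=(U\chi,AU\chi)=\<A_U\>_\chi$ and $\Delta_{U\chi}(A)=\|U(A_U-\<A_U\>_\chi)\chi\|=\Delta_\chi(A_U)$, and likewise $\<U\>_{U\chi}=(\chi,U\chi)=\<U\>_\chi$ and $\Delta_{U\chi}(U)=\Delta_\chi(U)$. In the notation of Proposition~\ref{gur} this means $\delta\<B\>=0$, whereas $\delta\<A\>=\<A_U\>_\chi-\<A\>_\chi$; writing $d=|\delta\<A\>|$ and noting that $\Delta_\vph(A)=\Delta_\chi(A_U)$, $\Delta_\vph(B)=\Delta_\chi(B)=\Delta_\chi(U)$, the common factor $\Delta_\chi(U)$ comes out of both terms of the second (equality) form of \eqref{guncert}, and one is left with
\[
 d\le\Delta_\chi(U)\inf_{\begin{smallmatrix}\la_1,\la_2\in\<0,1\>\\ \la_1+\la_2=1\end{smallmatrix}}
 \Big(\sqrt{\Delta_\chi^2(A_U)+d^2\la_1^2}+\sqrt{\Delta_\chi^2(A)+d^2\la_2^2}\Big)\,.
\]

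The remaining step is to bound the infimum, and for an upper bound on $d$ it suffices to insert one convenient value of $\la$. Taking $\la_1,\la_2$ proportional to $\Delta_\chi(A_U)$ and $\Delta_\chi(A)$ (setting aside the trivial case $\Delta_\chi(A_U)=\Delta_\chi(A)=0$), the bracket collapses exactly to $\sqrt{d^2+(\Delta_\chi(A_U)+\Delta_\chi(A))^2}$; one checks via the Minkowski inequality in $\mR^2$ applied to $(\Delta_\chi(A_U),d\la_1)$ and $(\Delta_\chi(A),d\la_2)$ that this is in fact the infimum, so nothing is lost. Hence $d\le\Delta_\chi(U)\sqrt{d^2+(\Delta_\chi(A_U)+\Delta_\chi(A))^2}$; squaring, using $\Delta_\chi^2(U)\le1$, and solving for $d$ with the definition of $\delta_\chi(U)$ yields \eqref{utr}. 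The two degenerate cases — $\Delta_\chi(A_U)=\Delta_\chi(A)=0$, where the bracket is just $d$, and $\Delta_\chi(U)=1$, where $\delta_\chi(U)=\infty$ — are immediate. I expect no conceptual obstacle here; the only things to be careful about are that the adjoint and unitarity manipulations in the first paragraph are legitimate on the stated domain, and the (routine) verification that the chosen $\la_i$ are nonnegative and sum to one, hence lie in $\<0,1\>$.
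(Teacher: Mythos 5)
Your proposal is correct and follows essentially the same route as the paper: the same substitutions $B=U$, $\vph=U\chi$ in Proposition~\ref{gur}, the same identification $q_{A,U}(U\chi,\chi)=\<A_U\>_\chi-\<A\>_\chi$ with $\delta\<B\>=0$, the same optimal weights $\la_i$ proportional to $\Delta_\chi(A_U)$ and $\Delta_\chi(A)$, and the same final step of solving $d\le\Delta_\chi(U)\sqrt{d^2+[\Delta_\chi(A_U)+\Delta_\chi(A)]^2}$ for $d$. Your Minkowski-inequality check that this choice actually attains the infimum is a slightly cleaner justification than the paper's appeal to the stationary-point condition \eqref{stat} (and, as you correctly note, any admissible $\la_i$ would suffice for the upper bound anyway).
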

\begin{proof}
We set $B=U$ and $\vph=U\chi$ in Proposition \ref{gur}. Then  the lhs of inequality \eqref{guncert} becomes  $|q_{A,U}(U\chi,\chi)|=|\<A_U\>_\chi-\<A\>_\chi|\equiv|\delta\<A\>|$. We also have $\<U\>_\vph=\<U\>_\chi$, $\Delta_\vph(U)=\Delta_\chi(U)$ and $\Delta_\vph(A)=\Delta_\chi(A_U)$. Using these values in the stationary point condition \eqref{stat} we find that its solution is given by $\la_1=\Delta_\chi(A_U)/(\Delta_\chi(A_U)+\Delta_\chi(U))$, $\la_2=1-\la_1$. With these values the rhs of inequality \eqref{guncert} is $\Delta_\chi(U)\big([\Delta_\chi(A_U)+\Delta_\chi(A)]^2+|\delta\<A\>|^2\big)^{1/2}$. Solving now the inequality for $|\delta\<A\>|$ one obtains \eqref{utr}.
\end{proof}

\subsection{Time evolution}

For a quantum system in Heisenberg picture, with the time evolution operator $U(t)=\exp[-itH]$, with $H$ the energy operator, consider Heisenberg normal variable $A_t$. Then $A_{t_2}=(A_{t_1})_{U(t_2-t_1)}$ in the notation of the last subsection. Relation \eqref{utr}, for $\chi\in\D(A_{t_2})\cap\D(A_{t_1})$,  takes now the form
\begin{equation}\label{time}
  |\<A_{t_2}\>_\chi-\<A_{t_1}\>_\chi|\leq\delta_\chi(U(t_2-t_1))\big[\Delta_\chi(A_{t_2})+\Delta_\chi(A_{t_1})\big]\,.
\end{equation}
If $\dsp\chi\in\bigcap_{\tau\in(t-\vep,t+\vep)}\D(A_\tau)\cap\D(H)$ and such that $A_t\chi$ is norm-continuous, one can recover the well-known relation
\begin{equation}\label{dt}
  \tfrac{1}{2}\Big|\frac{d\<A_t\>_\chi}{dt}\Big|\leq\Delta_\chi(H)\Delta_\chi(A_t)\,.
\end{equation}
It is well-known that there is no self-adjoint time operator which would be translated by time evolution as $U(t)^*TU(t)=T+t$ (see, e.g., \cite{bu95}).\footnote{We insist on self-adjointness; non-self-adjoint `time operators' considered in literature (a recent example is \cite{hi09}) do not have spectral decompositions, thus the probabilistic interpretation does not apply in standard form.} Formula \eqref{dt} may be interpreted as a substitute for time -- energy relation, in the following sense \cite{ma45}. Let $d\<A_t\>_\chi/dt\approx\con.$ in some interval of $t$, then $A$ may be rescaled so that in fact $d\<A_t\>_\chi/dt\approx1$. Then $\<A_t\>_\chi$ describes correctly the flow of time in this interval. The product of uncertainties is then bounded by $1/2$ from below.

The full formula \eqref{time} tells us more. Suppose  $\<A_t\>_\chi$ is an increasing function of time, say, for simplicity, proportional to $t$. For this quantity to be a good measure of time we demand that the deviation $\Delta_\chi(A_t)$ stays bounded by a constant. This is then possible only if $|\<U(t)\>_\chi|=|(\chi,\chi_t)|$ decreases at least as $1/t$ with time -- the Schr\"odinger evolution of the state has to bring it sufficiently fast away from the initial state.\footnote{On other state decay estimates -- in contex of non-self-adjoint time operators -- see \cite{ar05}.}

\subsection{Angular momentum}

Let $\Hc$ be a representation space of the usual angular momentum operators~$J_i$, either bosonic: \mbox{$\exp(i2\pi J_3)=\1$}, or fermionic: $\exp(i2\pi J_3)=- \1$. For \mbox{$\vph,\chi\in\D(\sqrt{J^2})$} the weak commutator of $J_i$ and $J_j$ is equal to the strong one, that is $q_{J_1,J_2}(\vph,\chi)=i(\vph,J_3\chi)$ (and permuted relations). This is easily seen by differentiating the identity
\begin{equation}
 (e^{-i\alpha J_1}\vph,e^{i\beta J_2}\chi)-(e^{-i\beta J_2}\vph,e^{i\alpha J_1}\chi)
 =(e^{-i\alpha J_1}\vph,[e^{i\beta J_2}-e^{i\beta(\cos\alpha J_2+\sin\alpha J_3)}]\chi)
\end{equation}
with respect to $\beta$, setting $\beta=0$, and then performing the same operation with respect to $\alpha$. Therefore for such vectors and any real numbers $j_1,j_2$ we obtain
\begin{equation}\label{angmom}
 |(\vph,J_3\psi)| \leq\|(J_1-j_1)\vph\|\,\|(J_2-j_2)\psi\|
 + \|(J_2-j_2)\vph\|\,\|(J_1-j_1)\psi\|\,.
\end{equation}
There are now a few possible choices of $\vph$. The standard choice $\vph=\psi$ and minimalization with respect to $j_1,j_2$ gives the standard relation \begin{equation}
 \tfrac{1}{2}|\<J_3\>_\psi|\leq\Delta_\psi(J_1)\Delta_\psi(J_2)\,,\quad \psi\in\D(\sqrt{J^2})
\end{equation}
on maximal possible domain. A~weak point of this relation is that $J_3$ is not positive, so the mean on the lhs may take arbitrarily small value for certain states, including zero.

If the spectrum of $J^2$ is bounded, say $\|J^2\|=j(j+1)$, then $\|J_i\|=j$, $|\<J_i\>_\psi|\leq j$, so setting $j_i=\<J_i\>_\psi$ and taking supremum over $\vph$, $\|\vph\|=1$, we find
\[
 \|J_3\psi\|\leq 2j\big(\Delta_\psi(J_1)+\Delta_\psi(J_2)\big)\,.
\]
The lhs now vanishes only for $J_3\psi=0$, in which case $\Delta_\psi^2(J_1)=\Delta_\psi^2(J_2)=\tfrac{1}{2}\<J^2\>_\psi$. For $J_i$ being spin $1/2$ operators the relation is $\tfrac{1}{2}\leq\Delta_\psi(J_1)+\Delta_\psi(J_2)$.

Finally, we consider the general case of unbounded $J^2$. We denote by $P_m$ the projection operator onto the eigensubspace $\Ker(J_3-m\1)$. Let $\mu$ be a spectral value of $J_3$ such that $\mu-1\leq0\leq\mu$ (thus in fermionic case $\mu=1/2$, while in bosonic case $\mu=0$ or $\mu=1$). We introduce further self-adjoint operators
\begin{equation}
 P=P_{\mu}+ P_{\mu-1}\,,\quad E=\Big(\sum_{m\geq \mu}-\sum_{m\leq\mu-1}\Big)P_m
\end{equation}
so that $E^2=\1$, $P^2=P$, $J_3=E|J_3|$.
\begin{pr}
In standing notation, for $\psi\in\D(\sqrt{J^2})$, there is
\begin{equation}
 (\psi,|J_3|\psi)\leq 2\Delta_\psi(J_1)\Delta_\psi(J_2)
 +\big\|\big[J^2+\tfrac{1}{4}\delta\big]^{1/2}P\psi\big\|\,
 \Big(\Delta_\psi(J_1)+\Delta_\psi(J_2)\Big)\,,
\end{equation}
where $\delta=0$ ($\delta=1$) in bosonic (fermionic) case, respectively.
\end{pr}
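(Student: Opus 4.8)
The plan is to feed the basic inequality \eqref{angmom} a cleverly chosen partner vector $\vph$. Observe that $E$ is a real bounded Borel function of $J_3$, hence self-adjoint and (because $E^2=\1$) unitary; it commutes with $J^2$, so it preserves $\D(\sqrt{J^2})$, and by construction $EJ_3=|J_3|$. I would therefore take in \eqref{angmom} the unit vector $\vph=E\psi\in\D(\sqrt{J^2})$ together with the numbers $j_1=\<J_1\>_\psi$, $j_2=\<J_2\>_\psi$. Self-adjointness of $E$ turns the left-hand side into $|(E\psi,J_3\psi)|=|(\psi,EJ_3\psi)|=(\psi,|J_3|\psi)$, exactly the quantity to be estimated, while the choice of $j_i$ makes $\|(J_i-j_i)\psi\|=\Delta_\psi(J_i)$. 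It then remains to control the factors $\|(J_i-j_i)E\psi\|$.

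For these I would write $(J_i-j_i)E\psi=E(J_i-j_i)\psi+[J_i,E]\psi$, so that unitarity of $E$ gives $\|(J_i-j_i)E\psi\|\le\Delta_\psi(J_i)+\|[J_i,E]\psi\|$. The commutators are evaluated with the ladder operators $J_\pm=J_1\pm iJ_2$: since $J_\pm$ shifts $J_3$ by $\pm1$, the spectral sum defining $E$ collapses to $[J_+,E]=-2J_+P_{\mu-1}$ and $[J_-,E]=2J_-P_\mu$, whence $[J_1,E]=J_-P_\mu-J_+P_{\mu-1}$ and $[J_2,E]=i(J_-P_\mu+J_+P_{\mu-1})$. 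Now $J_-P_\mu\psi\in P_{\mu-1}\Hc$ and $J_+P_{\mu-1}\psi\in P_\mu\Hc$ are orthogonal, so both commutators have the same norm on $\psi$; writing $K^2$ for $\|J_-P_\mu\psi\|^2+\|J_+P_{\mu-1}\psi\|^2$ we get $\|[J_1,E]\psi\|=\|[J_2,E]\psi\|=K$. Using $J_+J_-=J^2-J_3^2+J_3$ on $P_\mu\Hc$ and $J_-J_+=J^2-J_3^2-J_3$ on $P_{\mu-1}\Hc$, both quadratic-form computations produce the same scalar shift $\mu-\mu^2$, so $K^2=(P\psi,(J^2+\mu-\mu^2)P\psi)$. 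Finally, $\mu\in\{0,\tfrac12,1\}$ forces $\mu-\mu^2=\tfrac14\delta$ in every admissible case ($0$ in the bosonic cases, $\tfrac14$ in the fermionic one), and the operator under the square root is nonnegative since $J^2\ge0$ and $\mu\in\<0,1\>$; hence $K=\|[J^2+\tfrac14\delta]^{1/2}P\psi\|$. Substituting $\|(J_i-j_i)E\psi\|\le\Delta_\psi(J_i)+K$ into \eqref{angmom} and grouping terms yields the claimed bound: the two $\Delta_\psi(J_i)\Delta_\psi(J_j)$ contributions combine into $2\Delta_\psi(J_1)\Delta_\psi(J_2)$ and the two $K\Delta_\psi(J_i)$ contributions into $K(\Delta_\psi(J_1)+\Delta_\psi(J_2))$.

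The one genuinely delicate point is the rigor of the unbounded-operator manipulations — in particular the identity $J_iE\psi=EJ_i\psi+[J_i,E]\psi$ and the quadratic-form identities for $\|J_\pm P_m\psi\|^2$ on the domain $\D(\sqrt{J^2})$. I would dispose of this once and for all by passing to the isotypic decomposition of $\Hc$ into finite-dimensional spin-$j$ blocks: every operator above commutes with this decomposition and acts boundedly within each block, so all the identities are immediate there, and one lifts them to $\Hc$ using $J_i^2\le J^2$ (which makes $J_i$, and then $[J_i,E]$, relatively bounded with respect to $\sqrt{J^2}$) together with $\psi\in\D(\sqrt{J^2})$. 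With that bookkeeping in place, the substance of the argument is just the substitution $\vph=E\psi$ and the observation that $\|[J_i,E]\psi\|$ is precisely $\|[J^2+\tfrac14\delta]^{1/2}P\psi\|$.
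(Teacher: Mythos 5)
Your proposal is correct and follows essentially the same route as the paper: substitute $\vph=E\psi$, $j_i=\<J_i\>_\psi$ into \eqref{angmom} and bound the extra terms by an error operator supported on $P\Hc$ whose square is $(J^2+\tfrac14\delta)P$. Your commutator $[J_i,E]$ equals $E\,(EJ_iE-J_i)=-2EPJ_iP$, so it has the same norm on $\psi$ as the paper's $W_i=-2PJ_iP$, and the two computations coincide.
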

\begin{proof}
Expressing operators $J_i$ ($i=1,2$) in terms of $J_\pm$ one shows that $EJ_iE-J_i=-2PJ_iP\equiv W_i$, and furthermore, $W_i^2=\big(J^2+\mu(1-\mu)\big)P=(J^2+\tfrac{1}{4}\delta)P$, with $\delta$ defined in the thesis. It follows that $|W_i|=[J^2+\tfrac{1}{4}\delta]^{1/2}P$. Setting now $\vph=E\psi$, $j_i=\<J_i\>_\psi$ we note that
\[
\|(J_i-j_i)E\psi\|=\|(EJ_iE-j_i)\psi\|\leq \|(J_i-j_i)\psi\|+ \|W_i\psi\|\,,
\]
which gives the thesis when used in \eqref{angmom}.
\end{proof}

In particular, if $P_0\psi=P_{+1}\psi=0$ or $P_0\psi=P_{-1}\psi=0$ in bosonic case, or $P_{-1/2}\psi=P_{+1/2}\psi=0$ in fermionic case, then
\[
 \tfrac{1}{2}|\<|J_3|\>_\psi|\leq\Delta_\psi(J_1)\Delta_\psi(J_2)\,.
\]

\frenchspacing

\end{document}